\newtheorem{proposition}{{Proposition}}
\newtheorem{remark}{{Remark}}
\newcommand{\ls}[1]
    {\dimen0=\fontdimen6\the\font
     \lineskip=#1\dimen0
     \advance\lineskip.5\fontdimen5\the\font
     \advance\lineskip-\dimen0
     \lineskiplimit=.9\lineskip
     \baselineskip=\lineskip
     \advance\baselineskip\dimen0
     \normallineskip\lineskip
     \normallineskiplimit\lineskiplimit
     \normalbaselineskip\baselineskip
     \ignorespaces
    }
\definecolor{purple}{RGB}{128,0,128}
\title{Atomic Norm Based Localization and Orientation Estimation for Millimeter-Wave MIMO OFDM Systems \vspace{-10pt}}
\name{Jianxiu Li, Maxime Ferreira Da Costa, and Urbashi Mitra\thanks{ This work has been funded in part by one or more of the following: Cisco Foundation 1980393, ONR N00014-15-1-2550, ONR 503400-78050, NSF CCF-1410009, NSF CCF-1817200, NSF CCF-2008927, Swedish Research Council 2018-04359, ARO W911NF1910269, and DOE DE-SC0021417.}
\thanks{Authors' emails: \texttt{\{{jianxiul, mferreira, ubli}\}@usc.edu}}\vspace{-10pt}}
\address{Ming Hsieh Department of Electrical Engineering, Viterbi School of Engineering\\
University of Southern California, Los Angeles, CA, USA
\vspace{-15pt}}
\let\hat\widehat
\let\tilde\widetilde
\begin{document}
%
\maketitle
\begin{abstract} \vspace{-5pt}
Herein, an atomic norm based method for accurately estimating the location and orientation of a target from millimeter-wave multi-input-multi-output (MIMO) orthogonal frequency-division multiplexing (OFDM) signals is presented. A novel virtual channel matrix is introduced and an algorithm to extract localization-relevant channel parameters from its atomic norm decomposition is designed. Then, based on the extended invariance principle, a weighted least squares problem is proposed to accurately recover the location and orientation using both line-of-sight and non-line-of-sight channel information. Numerical results highlight performance improvements over a prior method and the resultant performance nearly achieves the Cram\'{e}r-Rao lower bound.
\end{abstract}

\vspace{-5pt}
\begin{keywords}
Atomic norm, localization, orientation estimation, millimeter-wave MIMO OFDM systems.
\end{keywords}

\vspace{-10pt}
\section{Introduction}
\label{sec:intro}
\vspace{-5pt}
Millimeter-wave (mmWave) communications are widely adopted in the fifth generation multi-input-multi-output (MIMO) systems due to the massive available bandwidth and tiny wavelength \cite{Rappaport}. While the path loss they incur is a core challenge, it can be mitigated by beamforming \cite{Hur}. Due to the limited resulting number of paths, mmWave MIMO signaling is of interest for localization \cite{Hua,Saloranta,Shahmansoori}. In \cite{Saloranta,Shahmansoori},  localization based on classic compressed sensing is pursued by exploiting multipath sparsity.
However, the performance is limited by quantization error and grid resolution.  In \cite{Shahmansoori}, a space-alternating generalized expectation maximization (SAGE) algorithm is proposed to refine the channel estimates for localization, initialized with the channel parameters that are coarsely estimated via a modified distributed compressed sensing simultaneous orthogonal matching pursuit (DCS-SOMP) scheme \cite{Duarte}. However, this method suffers from local minima when the signal-to-noise ratio (SNR) is low or when the initialization is not sufficiently accurate.

On the other hand, atomic norm minimization \cite{chi2020harnessing,candes2014towards,TangG} (ANM, \emph{a.k.a.} total variation minimization) has emerged as a convex optimization framework for estimating continuous valued parameters without relying on discretization. ANM is robust to noise \cite{bhaskar2013atomic,da2020stable}. It has previously been employed  for the purpose of localization \cite{Wu1,Wu2,TangW}, but without multipath considerations. In \cite{Tsai}, ANM is used for channel estimation, but cannot be used for localization as time-of-arrival is not considered in the model. In this paper, we design an ANM based approach for high-accuracy localization and orientation estimation using  mmWave MIMO orthogonal frequency-division multiplexing (OFDM) signaling.

The main contributions of this paper are:\vspace{-6pt}
\begin{enumerate}[1)]
\item A novel {\it virtual channel matrix} is designed for mmWave MIMO OFDM multi-path channels. \vspace{-6pt}
\item A multi-dimensional atomic norm based channel estimator is proposed for positioning purposes, where the structure of the proposed virtual channel matrix is explicitly exploited to simultaneously estimate TOAs, angle-of-arrivals (AOA), and angle-of-departures (AOD) with super-resolution; \vspace{-6pt}
\item To accurately recover location and orientation, a weighted least squares scheme is proposed based on the extended invariance principle (EXIP) \cite{Stoica}, where the designed weight matrix is compatible with the ANM channel estimator;\vspace{-6pt}
\item Numerical comparisons to the DCS-SOMP based method \cite{Shahmansoori} show that the proposed scheme offers more than $7$dB gain with respect to the root-mean-square error (RMSE) of estimation when a small number of antennas are employed.  Furthermore,  the proposed method nearly achieves the
 Cram\'{e}r-Rao lower bound (CRLB) \cite{Shahmansoori} in many cases.
\end{enumerate}
\vspace{-20pt}
\section{Signal Model}\label{sec:signal} \vspace{-7pt}
We adopt the narrowband channel model of \cite{Shahmansoori}, where a single base station (BS) is equipped with $N_t$ antennas and a target has $N_r$ antennas. The locations of the BS and the target are denoted by $\boldsymbol{q}=\left[q_{x}, q_{y}\right]^{\mathrm{T}} \in \mathbb{R}^{2}$ and $\boldsymbol{p}=\left[q_{x}, q_{y}\right]^{\mathrm{T}} \in \mathbb{R}^{2}$, respectively, where $\boldsymbol{q}$ is known while $\boldsymbol{p}$ is to be estimated. In addition, there is an unknown orientation of the target's antenna array, denoted by $\theta_o$. Assume that one line-of-sight (LOS) path and $K$ non-line-of-sight (NLOS) paths exist in the mmWave MIMO OFDM channel and the total number of the paths is known. The $k$-th NLOS path is produced by a scatterer at an unknown location $\boldsymbol{s}_k=\left[s_{k,x}, s_{k,y}\right]^{\mathrm{T}} \in \mathbb{R}^{2}$.
\begin{figure}[t]
    \centering
    \includegraphics[width=0.4\textwidth]{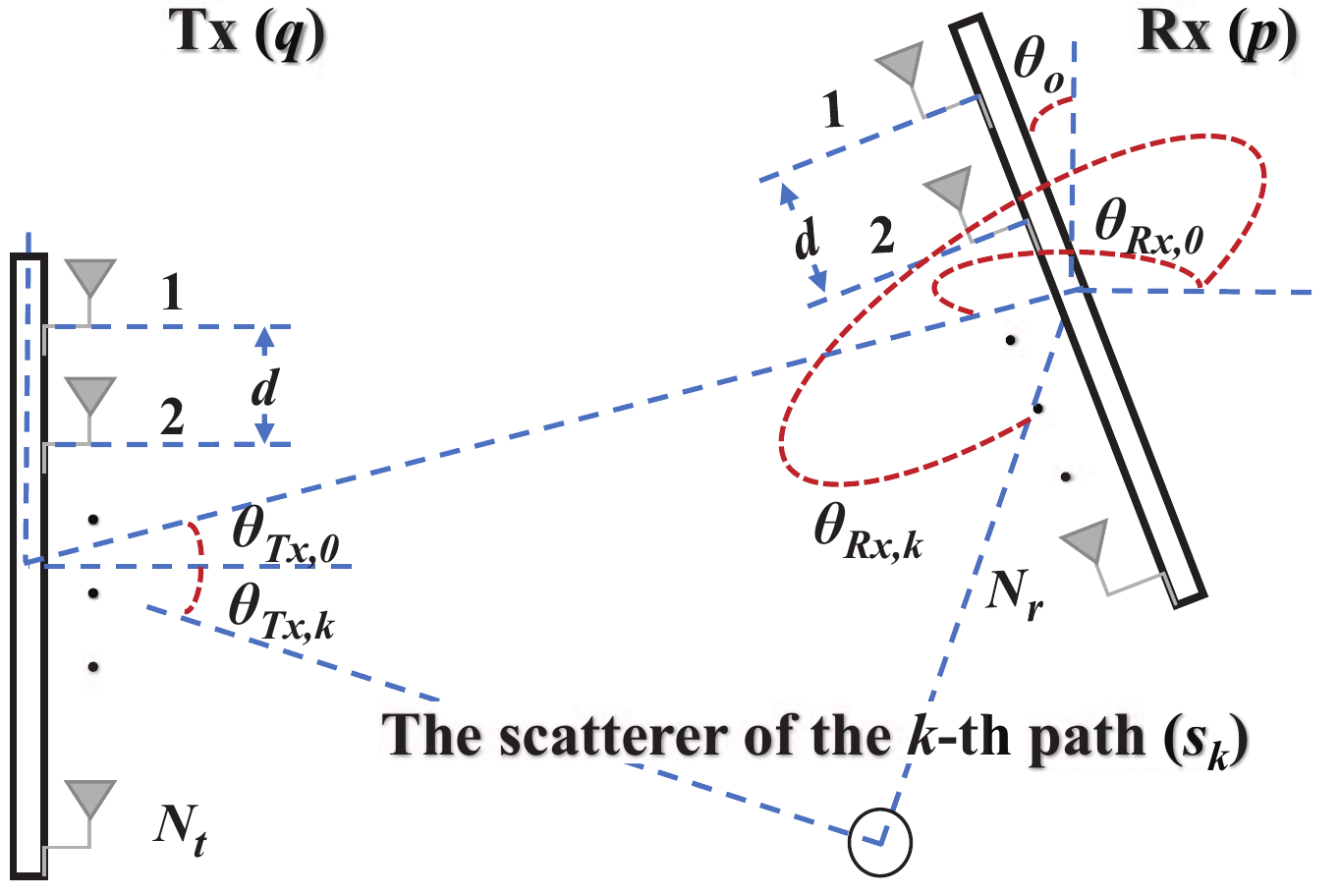} \vspace{-10pt}
    \caption{~{System model.}\vspace{-15pt}}
    \label{systemmodel}
\end{figure}
Denoting by $N$ the number of the sub-carriers, we transmit $G$ OFDM pilot signals with carrier frequency $f_c$ and bandwidth $B\ll f_c$. Given  the $g$-th pilot signal over the $n$-th sub-carrier $\boldsymbol{s}^{(g,n)}$ \footnote{The pilot signals are assumed to be known at the receiver and $\boldsymbol{s}^{(g,n)}\in\mathbb{C}^{N_t}$ is a general expression that permits the incorporation of the beamforming matrix, the design of which is beyond the scope of this paper.}, the $g$-th received signal over the $n$-th sub-carrier is given by \vspace{-5pt}
\begin{equation}
\boldsymbol{y}^{(g,n)}=\boldsymbol{H}^{(n)}  \boldsymbol{s}^{(g,n)}+\boldsymbol{w}^{(g,n)},\label{rsignal}\vspace{-5pt}
\end{equation}
where $\boldsymbol{w}^{(g,n)} \sim \mathcal{CN}(\bm{0},\sigma^2 \bm{I}_{N_r})$ is an independent, zero-mean, complex Gaussian vector with variance $\sigma^2$. We denote by $\bm{a}_N(f) \in \mathbb{C}^N$ the Fourier vector
\(
    \bm{a}_N(f) \triangleq \frac{1}{\sqrt{N}} \left[1, e^{-j 2\pi f}, \dots, e^{-j 2\pi (N-1)f}\right]^{\mathrm{T}}.
\)
The $n$-th sub-carrier channel matrix $\boldsymbol{H}^{(n)}$ with $0\leq n\leq N-1$ is then given by\vspace{-4pt}
\begin{equation}
\boldsymbol{H}^{(n)}\triangleq\sum_{k=0}^{K}\gamma_k e^{\frac{-j 2\pi n\tau_k}{N T_{s}}}\boldsymbol{\alpha}(\theta_{\mathrm{Rx},k})\boldsymbol{ \beta}\left(\theta_{\mathrm{Tx},k}\right)^{\mathrm{H}},
\label{channelmatrix_subcarrier}\vspace{-4pt}
\end{equation}
where $T_s\triangleq\frac{1}{B}$ is the sampling period, $\gamma_k\triangleq\sqrt{N_tN_r}\frac{h_{k}}{\sqrt{\rho_{k}}}$ is the channel coefficient of the $k$-th path, while $\rho_k$ and $h_k$ represents the path loss and the complex channel gain, respectively. The operator $[\cdot]^{\mathrm{H}}$ is the Hermitian transpose, and the steering vectors of the system, {\it i.e.,} $\boldsymbol{\alpha}(\theta_{\mathrm{Rx}})$ and $\boldsymbol{\beta}(\theta_{\mathrm{Tx}})$, are defined as $\boldsymbol{\alpha}(\theta_{\mathrm{Rx}})\triangleq \bm{a}_{N_r}\left(\frac{d \sin(\theta_{\mathrm{Rx}})}{\lambda_c} \right)$, $\boldsymbol{\beta}(\theta_{\mathrm{Tx}}) \triangleq \bm{a}_{N_t}\left(\frac{d \sin(\theta_{\mathrm{Tx}})}{\lambda_c} \right)$, where $d$ is the distance between antennas and $\lambda_c\triangleq \frac{c}{f_c}$ is the wavelength with $c$ being the speed of light. From the geometry shown in Fig. \ref{systemmodel}, the TOA, AOA, and AOD of each path, {\it i.e.,} $\tau_{k}$, $\theta_{\mathrm{Rx}, k}$, and $\theta_{\mathrm{Tx}, k}$, with $0\leq k\leq K$, are\vspace{-5pt}
\begin{subequations}\label{eq:geometricMapping}
\begin{align}
\tau_{0} &=\frac{\|\boldsymbol{p}-\boldsymbol{q}\|_{2}} { c},\label{tau0}\vspace{-4pt} \\
\tau_{k} &=\frac{\left\|\boldsymbol{q}-\boldsymbol{s}_{k}\right\|_{2} +\left\|\boldsymbol{p}-\boldsymbol{s}_{k}\right\|_{2}} {c}, \quad k>0, \vspace{-4pt}\\
\theta_{\mathrm{Tx}, 0} &=\arctan \left(\frac{p_{y}-q_{y} }{p_{x}-q_{x}}\right), \vspace{-3pt}\\
\theta_{\mathrm{Tx}, k} &=\arctan \left(\frac{s_{k,y}-q_{y}} {s_{k,x}-q_{x}}\right),  \quad k>0,\vspace{-3pt}\\
\vspace{-3pt}
\theta_{\mathrm{Rx}, 0} &=\pi+\arctan \left(\frac{p_{y}-q_{y}} {p_{x}-q_{x}}\right)-\theta_o,\vspace{-3pt}\\
\theta_{\mathrm{Rx}, k} &=\pi+\arctan \left(\frac{p_{y}-s_{k,y}}{p_{x}-s_{k,x}}\right)-\theta_o, \quad k>0,
\label{thetarxk}\vspace{-3pt}
\end{align}
\end{subequations}
where $k=0$ corresponds to the LOS path.
By stacking the received signal given in (\ref{rsignal}), we have\vspace{-3pt}
\begin{equation}
    \boldsymbol{Y} = \boldsymbol{H}\boldsymbol{S} + \boldsymbol{W};\vspace{-3pt}\label{signalmodelstacked} \vspace{-3pt}
\end{equation}
where $\boldsymbol{Y} \triangleq\left[\left(\boldsymbol{Y}^{(0)}\right)^{\mathrm{T}}, \left(\boldsymbol{Y}^{(1)}\right)^{\mathrm{T}}, \ldots, \left(\boldsymbol{Y}^{(N-1)}\right)^{\mathrm{T}}\right]^{\mathrm{T}}$, $\boldsymbol{H}\triangleq\operatorname{diag}\left\{\boldsymbol{H}^{(n)}\right\}$, $\boldsymbol{S} \triangleq\left[\left(\boldsymbol{S}^{(0)}\right)^{\mathrm{T}},   \left(\boldsymbol{S}^{(1)}\right)^{\mathrm{T}}, \ldots, \left(\boldsymbol{S}^{(N-1)}\right)^{\mathrm{T}}\right]^{\mathrm{T}}$, and $\boldsymbol{W} \triangleq\left[\left(\boldsymbol{W}^{(0)}\right)^{\mathrm{T}},   \left(\boldsymbol{W}^{(1)}\right)^{\mathrm{T}}, \ldots, \left(\boldsymbol{W}^{(N-1)}\right)^{\mathrm{T}}\right]^{\mathrm{T}}$. with
$\boldsymbol{Y}^{(n)} \triangleq \left[\boldsymbol{y}^{(1,n)}, \boldsymbol{y}^{(2,n)}, \ldots, \boldsymbol{y}^{(G,n)}\right]$, $\boldsymbol{S}^{(n)} \triangleq\left[ \boldsymbol{s}^{(1,n)},  \boldsymbol{s}^{(2,n)}, \right.$
$\left.\ldots,  \boldsymbol{s}^{(G,n)}\right]$,
and $\boldsymbol{W}^{(n)} \triangleq\left[ \boldsymbol{w}^{(1,n)},  \boldsymbol{w}^{(2,n)}, \ldots,  \boldsymbol{w}^{(G,n)}\right]$.

Furthermore, it is assumed that the receiver knows the transmitted symbols $\boldsymbol{S}$, and aims to estimate its orientation $\theta_o$ and target position $\bm{p}$.

\vspace{-10pt}
\section{Structure of mmWave MIMO OFDM narrowband channels}\label{sec:structure}
In this section, we present the structure of mmWave MIMO OFDM narrowband channel matrices. Without loss of generality, $N$ is assumed to be an odd integer, and we formulate a novel virtual channel matrix $\bm{H}_v$  to jointly exploit the received signal from all sub-carriers for localization as \vspace{-3pt}
\begin{equation}
\begin{aligned}
&\boldsymbol{H}_v\triangleq\sum_{k=0}^{K}l_k \left({\boldsymbol{\xi}}(\tau_k)\otimes\boldsymbol{\alpha}(\theta_{\mathrm{Rx},k})\right)\left({\boldsymbol{\xi}}(-\tau_k)\otimes\boldsymbol{ \beta}\left(\theta_{\mathrm{Tx},k}\right)\right)^{\mathrm{H}},
\end{aligned}\label{virtualchannel}\vspace{-3pt}
\end{equation}
where the operator $\otimes$ represents the Kronecker product, \mbox{$l_k \triangleq \frac{(N+1)\sqrt{N_tN_r}h_{k}}{2\sqrt{\rho_{k}}}$} and
${\boldsymbol{\xi}}(\tau)\triangleq \sqrt{\frac{2}{N+1}} \bm{a}_{\frac{N+1}{2}}(\frac{\tau}{N T_s})$.
We make some key observations regarding $\boldsymbol{H}_v$ in \eqref{virtualchannel},\vspace{-3pt}
\begin{itemize}
    \item[O1)] ${\boldsymbol{H}}_v$ is a low-rank matrix provided that $K+1\ll \min(N_r, N_t)< \min\left(\frac{(N+1)N_r}{2}, \frac{(N+1)N_r}{2}\right)$; \vspace{-6pt}
    \item[O2)] ${\boldsymbol{H}}_v$ has the same rank as $\bm{H}^{(n)}$, for any $n$; \vspace{-6pt}
    \item[O3)] ${\boldsymbol{H}}_v$ is a block Hankel matrix, {\it i.e.}, the ($i,j$)-th $N_r \times N_t$ block matrix ${\boldsymbol{H}}_v^{(i,j)}$ of ${\boldsymbol{H}}_v$ verifies ${\boldsymbol{H}}_v^{(i,j)}={\boldsymbol{H}}_v^{(k,z)}$ if $i+j=k+z$ for any $1\leq i,j,k,z\leq \frac{N+1}{2} $\vspace{-6pt}
    \item[O4)] ${\boldsymbol{H}}_v^{(i,j)}={\boldsymbol{H}}^{(i+j-2)}$ holds for any $1\leq i,j\leq \frac{N+1}{2} $, defining an automorphism $g$ between  ${\boldsymbol{H}}_v$ and $\boldsymbol{H}$ with $g(\boldsymbol{H}_v)=\boldsymbol{H}$.\vspace{-3pt}
\end{itemize}

\vspace{-10pt}
\section{Atomic norm based localization with orientation estimation}\label{sec:method} 

Based on the structural properties of the mmWave MIMO OFDM narrowband channel (O1-O4) discussed in Section \ref{sec:structure}, the low-rank property of each sub-carrier channel matrix can be ensured by exploiting the structure of the introduced virtual channel matrix. In this section, we first propose a tractable optimization problem to estimate the virtual channel matrix as well as the individual channel parameters based on ANM. Then, given the mappings (\ref{tau0})-(\ref{thetarxk}) between $\boldsymbol{\eta}\triangleq\{\tau_k,\theta_{\mathrm{Tx},k},\theta_{\mathrm{Rx},k}\}_{k \in \{0,1\ldots, K\}}$ and $\tilde{\boldsymbol{\eta}}\triangleq\{\boldsymbol q,\theta_o, \{\boldsymbol
s_k\}_{k \in \{0,1\ldots, K\}}\}$, we estimate the orientation and location of the target via a weighted least squares problem based on the EXIP \cite{Stoica}, which is compatible with our proposed ANM based channel estimator.\vspace{-3pt}
\subsection{Channel estimation}\label{sec:channelestimation}

From O1, $\boldsymbol{H}_v$ is a low-rank matrix. We harness the low-rank structure of $\boldsymbol{H}_v$ by defining the atomic set $\bm{\mathcal{A}}$ as \vspace{-5pt}
\begin{multline}\label{eq:atomicSet}
\boldsymbol{\mathcal{A}}\triangleq \big\{ \boldsymbol{A}\left(\tau,\theta_{\mathrm{Rx}}, \theta_{\mathrm{Tx}}\right) \triangleq\boldsymbol{\chi}(\tau,\theta_{\mathrm{Rx}})\boldsymbol{\zeta}(\tau,\theta_{\mathrm{Tx}})^{\mathrm{H}}|\\
\frac{d \sin \left(\theta_{\mathrm{Rx}}\right)}{\lambda_{c}}, {\frac{d \sin \left(\theta_{\mathrm{Tx}}\right)}{\lambda_{c}} }\in(-\frac{1}{2},\frac{1}{2}],\frac{\tau}{NT_s } \in(0,1].\big\}\vspace{-3pt}
\end{multline}
where ${{\boldsymbol{\chi}}}(\tau,\theta_{\mathrm{}})\triangleq{\boldsymbol{\xi}}(\tau)\otimes\boldsymbol{ \alpha}\left(\theta_{\mathrm{}}\right)
$ and ${{\boldsymbol{\zeta}}}(\tau,\theta_{\mathrm{}})\triangleq{\boldsymbol{\xi}}(-\tau)\otimes\boldsymbol{ \beta}\left(\theta_{\mathrm{}}\right).$
Proposition \ref{prop:equivalenceOfTheAtomicNorm} states the conditions under which  the atomic norm  $\left\Vert \cdot \right\Vert_{\mathcal{A}}$ induced by the atomic set  \eqref{eq:atomicSet}, defined as
\vspace{-3pt}$${\|\boldsymbol{H}_v \|}_{\mathcal{A}}\triangleq\inf\left\{\sum_{k}\left|{\tilde{l}_{k}}\right| \mid \boldsymbol{H}_v=\sum_{k} \tilde{l}_{k} \boldsymbol{A}\left(\tau_k,\theta_{\mathrm{Rx},k}, \theta_{\mathrm{Tx},k}\right)\right\}\vspace{-3pt}$$
can be calculated by solving a semidefinite program (SDP).
\begin{proposition}\label{prop:equivalenceOfTheAtomicNorm} Given the two conditions\vspace{-5pt}
\begin{itemize}
    \item [C1)]  $N_r, N_t\geq257$ and $N\geq513$\footnote{The condition C1 is just a technical requirement and is not necessary for the practical use \cite{Yang1,Tsai}.}, \vspace{-6pt}
    \item [C2)]
    $\Delta_{\min}\left(\frac{d \sin \left(\theta_{\mathrm{Rx}}\right)}{\lambda_c}\right)\geq\frac{1}{\lfloor\frac{N_r-1}{4}\rfloor}$, $\Delta_{\min}\left(\frac{d \sin \left(\theta_{\mathrm{Tx}}\right)}{\lambda_c}\right)\geq\frac{1}{\lfloor\frac{N_t-1}{4}\rfloor}$, and $\Delta_{\min}(\frac{\tau}{NT_s})\geq\frac{1}{\lfloor\frac{N-1}{8}\rfloor}$, where $\Delta_{\min}(\kappa)\triangleq\min_{i\neq j}\min(|\kappa_i-\kappa_j|,1-|\kappa_i-\kappa_j|)$\footnote{The condition C2 indicates that the bandwidth and the number of transmit and receive antennas need to be large enough to ensure that the TOAs, AODs, and AOAs of the $K+1$ paths, respectively, are sufficiently separated. Such conditions are necessary for the tightness of the ANM \cite{da2018tight}.}.\vspace{-3pt}
\end{itemize}
the atomic norm  ${\|\boldsymbol{H}_v \|}_{\mathcal{A}}$ is equivalently given by\vspace{-3pt}
\begin{equation}
\begin{aligned}
&\inf_{\boldsymbol{V},\boldsymbol{U},\boldsymbol{H}_v}\quad \frac{1}{2}\operatorname{Tr}\left(\boldsymbol{J}\right) \\
&\quad\text { s.t. }\boldsymbol{J}\triangleq\left[\begin{array}{cc}
\mathcal{T}_2({{\boldsymbol{U}}}) & {\boldsymbol{H}_v} \\
\boldsymbol{H}_v^{\mathrm{H}} & \mathcal{T}_2({{\boldsymbol{V}}})
\end{array}\right] \succeq \mathbf{0},\\
&\quad\quad\  \ \boldsymbol{H}_v^{(i,j)}=\boldsymbol{H}_v^{(k,z)},\textrm{ if } i+j=k+z, \forall i, j, k, z,
\end{aligned}\label{atomicnorm}\vspace{-3pt}
\end{equation}
where $\operatorname{Tr}(\cdot)$ denotes the trace of a matrix; $\mathcal{T}_2(\cdot)$ is a 2-level Toeplitz matrix constructed based on a matrix and its definition can be found in \cite{Yang2}. \label{equivalence}
\end{proposition}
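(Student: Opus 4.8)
The plan is to establish the two inequalities relating the optimal value of the semidefinite program \eqref{atomicnorm}, call it $\mathrm{SDP}(\boldsymbol{H}_v)$, to $\|\boldsymbol{H}_v\|_{\mathcal{A}}$, and to show they coincide under C1--C2. The structural backbone is that each atom factors as $\boldsymbol{A}=\boldsymbol{\chi}\boldsymbol{\zeta}^{\mathrm{H}}$ with $\boldsymbol{\chi}(\tau,\theta)=\boldsymbol{\xi}(\tau)\otimes\boldsymbol{\alpha}(\theta)$ and $\boldsymbol{\zeta}(\tau,\theta)=\boldsymbol{\xi}(-\tau)\otimes\boldsymbol{\beta}(\theta)$, so that the rank-one Gram matrices $\boldsymbol{\chi}\boldsymbol{\chi}^{\mathrm{H}}$ and $\boldsymbol{\zeta}\boldsymbol{\zeta}^{\mathrm{H}}$ are Kronecker products of two rank-one single-level Hermitian Toeplitz matrices, hence lie in the range of $\mathcal{T}_2(\cdot)$; meanwhile the opposite signs $\pm\tau$ turn $\boldsymbol{\chi}\boldsymbol{\zeta}^{\mathrm{H}}$ into a matrix whose delay-level blocks depend only on the index \emph{sum}, i.e. it is block-Hankel in the sense of O3. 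These two facts are exactly what make the diagonal blocks of $\boldsymbol{J}$ $2$-level Toeplitz and the off-diagonal block subject to the Hankel equality constraint.

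For the easy direction, $\mathrm{SDP}(\boldsymbol{H}_v)\le\|\boldsymbol{H}_v\|_{\mathcal{A}}$, I would take any atomic decomposition $\boldsymbol{H}_v=\sum_k\tilde l_k\boldsymbol{A}(\tau_k,\theta_{\mathrm{Rx},k},\theta_{\mathrm{Tx},k})$ and set $\mathcal{T}_2(\boldsymbol{U})=\sum_k|\tilde l_k|\,\boldsymbol{\chi}_k\boldsymbol{\chi}_k^{\mathrm{H}}$ and $\mathcal{T}_2(\boldsymbol{V})=\sum_k|\tilde l_k|\,\boldsymbol{\zeta}_k\boldsymbol{\zeta}_k^{\mathrm{H}}$, which are legitimate $2$-level Toeplitz matrices by the remark above. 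Writing $\tilde l_k=|\tilde l_k|e^{j\phi_k}$ and letting $\boldsymbol{r}_k$ be the stacking of $\boldsymbol{\chi}_k$ with $e^{j\phi_k}\boldsymbol{\zeta}_k$, one checks $\boldsymbol{J}=\sum_k|\tilde l_k|\,\boldsymbol{r}_k\boldsymbol{r}_k^{\mathrm{H}}\succeq\mathbf{0}$ and that the Hankel constraint holds block-by-block because each individual atom already satisfies O3. The objective $\tfrac12\operatorname{Tr}(\boldsymbol{J})$ then collapses to $\sum_k|\tilde l_k|$ up to the fixed normalization of the atoms; taking the infimum over decompositions yields the inequality.

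The reverse inequality $\mathrm{SDP}(\boldsymbol{H}_v)\ge\|\boldsymbol{H}_v\|_{\mathcal{A}}$ is the substantive part and is where C1--C2 enter. Starting from an optimal $(\boldsymbol{U}^\star,\boldsymbol{V}^\star)$, I would use complementary slackness to argue that the optimal $\boldsymbol{J}$ is low-rank, so that $\mathcal{T}_2(\boldsymbol{U}^\star)$ and $\mathcal{T}_2(\boldsymbol{V}^\star)$ are low-rank positive-semidefinite $2$-level Toeplitz matrices. The key tool is the $2$-level Vandermonde decomposition of \cite{Yang1,Yang2}, which --- unlike the one-dimensional Carath\'eodory--Fej\'er case --- is only guaranteed when the rank stays below the per-level dimensions; this is precisely the role of C1. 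It lets me write $\mathcal{T}_2(\boldsymbol{U}^\star)=\sum_i a_i\,\boldsymbol{\chi}(\sigma_i,\phi_i)\boldsymbol{\chi}(\sigma_i,\phi_i)^{\mathrm{H}}$ and likewise for $\boldsymbol{V}^\star$. Then $\boldsymbol{J}\succeq\mathbf{0}$ forces the range inclusions $\operatorname{range}(\boldsymbol{H}_v)\subseteq\operatorname{range}(\mathcal{T}_2(\boldsymbol{U}^\star))$ and $\operatorname{range}(\boldsymbol{H}_v^{\mathrm{H}})\subseteq\operatorname{range}(\mathcal{T}_2(\boldsymbol{V}^\star))$ which, combined with the block-Hankel constraint coupling the two sides through the shared delay with opposite sign, pin the delay supports of the two decompositions to a common set and pair each left harmonic with a right harmonic in the same column of $\boldsymbol{H}_v$; the minimum-separation hypothesis C2 guarantees these supports are resolvable and the pairing unique. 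Recovering $\boldsymbol{H}_v=\sum_k l_k\boldsymbol{A}_k$ this way gives $\tfrac12\operatorname{Tr}(\boldsymbol{J})\ge\sum_k|l_k|\ge\|\boldsymbol{H}_v\|_{\mathcal{A}}$.

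The main obstacle is precisely this extraction step, for two intertwined reasons. First, multidimensional positive Toeplitz forms need not admit a harmonic decomposition at all, so the SDP is tight only in a rank/separation regime; I would discharge this by invoking the tightness characterization of \cite{da2018tight} together with \cite{Yang1,Yang2}, whose hypotheses are met by C1--C2. Second, the delay variable is shared between the receive side (delay, AOA) and the transmit side (delay, AOD) and appears with opposite signs, so the two $2$-level Vandermonde decompositions must be made consistent along the delay axis before being fused into three-parameter atoms; this cross-block matching, enforced by the Hankel equality, is the genuinely nonstandard piece, and is the reason the delay separation in C2 is tightened by an extra factor (the $\lfloor (N-1)/8\rfloor$ threshold versus $\lfloor (N_r-1)/4\rfloor$ for the angles), reflecting that the delay dimension is consumed by both factors. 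An equivalent route is to argue on the dual: the constraint $\|\boldsymbol{Q}\|_{\mathcal{A}}^{*}\le1$ is a uniform bound on a three-variable trigonometric polynomial, whose boundedness is certified by a sum-of-squares / positive-semidefinite multilevel-Toeplitz Gram matrix; C1 makes this certificate exact and C2 guarantees an interpolating dual polynomial (a multidimensional Fej\'er-type kernel) attaining the sign pattern of the optimal decomposition, again yielding equality.
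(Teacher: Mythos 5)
Your skeleton (two inequalities) and your easy direction are sound: each coupled atom $\boldsymbol{\chi}\boldsymbol{\zeta}^{\mathrm{H}}$ is indeed block-Hankel because the delay enters the two factors with opposite signs, the Gram matrices $\boldsymbol{\chi}\boldsymbol{\chi}^{\mathrm{H}}$ and $\boldsymbol{\zeta}\boldsymbol{\zeta}^{\mathrm{H}}$ are $2$-level Toeplitz, and stacking gives a feasible PSD $\boldsymbol{J}$ whose trace reproduces the atomic cost. This is precisely the content the paper imports as \cite[Lemma 1]{Tsai}, so that half is fine.

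The genuine gap is in your reverse direction, and it is twofold. First, ``complementary slackness'' does not give you a low-rank optimal $\boldsymbol{J}$: SDP optimizers need not be low rank, and---unlike the one-dimensional Carath\'eodory--Fej\'er setting---a PSD multilevel Toeplitz matrix admits a Vandermonde decomposition only under a rank condition (the Yang--Xie theorem), which you never establish for the optimizer; so the very first step of your extraction argument is unsupported. Second, even granting the two $2$-level decompositions, the fusion into three-parameter atoms sharing a common delay with opposite signs---the step you yourself flag as ``the genuinely nonstandard piece''---is asserted rather than proven, and the references you lean on (\cite{Yang1,Yang2,da2018tight}) concern decoupled multilevel atomic sets or one-dimensional tightness, so ``whose hypotheses are met by C1--C2'' does not discharge it. The paper never does any of this work: it lets $\|\boldsymbol{H}_v\|$ denote the value of the constrained SDP \eqref{atomicnorm} and $\operatorname{SDP}(\boldsymbol{H}_v)$ the value of the same program \emph{without} the Hankel constraint, as in \cite[Eq. (35)]{Tsai}; then (i) $\|\boldsymbol{H}_v\|\geq\operatorname{SDP}(\boldsymbol{H}_v)$ trivially, since adding constraints can only raise an infimum, (ii) $\|\boldsymbol{H}_v\|\leq\|\boldsymbol{H}_v\|_{\mathcal{A}}$ by the feasibility construction (\cite[Lemma 1]{Tsai}), and (iii) $\operatorname{SDP}(\boldsymbol{H}_v)=\|\boldsymbol{H}_v\|_{\mathcal{A}}$ under C1--C2, which is exactly \cite[Lemma 2]{Tsai}, cited rather than re-proven. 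The chain $\operatorname{SDP}(\boldsymbol{H}_v)\leq\|\boldsymbol{H}_v\|\leq\|\boldsymbol{H}_v\|_{\mathcal{A}}=\operatorname{SDP}(\boldsymbol{H}_v)$ closes without ever touching the structure of the optimizer, and notably the Hankel constraint plays no role in the hard direction---whereas your argument makes it load-bearing for the delay pairing, which is exactly the part you cannot complete. If you insist on a self-contained proof of (iii), the dual-certificate route you sketch at the end is the viable one, but constructing the multidimensional interpolating polynomial for the \emph{coupled} atom set is the substance of the cited lemma, not a detail one can wave at.
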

\begin{proof}
Let $\|{\boldsymbol{H}_v}\|$ represent the objective value in (\ref{atomicnorm}) and define $\operatorname{SDP}(\boldsymbol{H}_v)$ according to \cite[Eq. (35)]{Tsai} as ,\vspace{-5pt}
\begin{equation}
\begin{aligned}
&\operatorname{SDP}(\boldsymbol{H}_v)\triangleq\inf_{\boldsymbol{V},\boldsymbol{U},\boldsymbol{H}_v}\quad \frac{1}{2}\operatorname{Tr}\left(\boldsymbol{J}\right) \\
&\quad\quad\quad\quad\quad\quad\ \text { s.t. }\boldsymbol{J}\triangleq\left[\begin{array}{cc}
\mathcal{T}_2({{\boldsymbol{U}}}) & {\boldsymbol{H}_v} \\
\boldsymbol{H}_v^{\mathrm{H}} & \mathcal{T}_2({{\boldsymbol{V}}})
\end{array}\right] \succeq \mathbf{0}.
\end{aligned}\label{SDP}\vspace{-3pt}
\end{equation}
The inequality $\|\boldsymbol{H}_v\| \geq \operatorname{SDP}(\boldsymbol{H}_v)$ holds based on the definitions of the key quantities. It can be shown from \cite[Lemma 1]{Tsai} that $\|\boldsymbol{H}_v\|\leq {\|\boldsymbol{H}_v\|}_{\mathcal{A}}$. Furthermore, from \cite[Lemma 2]{Tsai}  the equality  $\operatorname{SDP}(\boldsymbol{H}_v)={\|\boldsymbol{H}_v\|}_{\mathcal{A}}$ holds when conditions C1 and C2 are satisfied. Therefore, we conclude that $\|\boldsymbol{H}_v\|={\|\boldsymbol{H}_v\|}_{\mathcal{A}}$ given conditions C1 and C2.
\end{proof}
\begin{remark}
A Toeplitz-Hankel formulation is proposed in \cite{Cho} for the recovery of one-dimensional signals, which is proved to be equivalent to the atomic norm when the Hankel matrix therein admits a Vandermonde decomposition. Though the formulation in \cite{Cho} might be extended for the multi-dimensional case, the extended formulation still does not fit our signal model since $\boldsymbol{H}_v$ is not a Hankel matrix.
\end{remark}

From the observations (O1-O4) and Proposition \ref{equivalence}, the atomic norm denoiser of the virtual matrix $\bm{H}_v$ is formulated as, \vspace{-3pt}
\begin{align}
(\boldsymbol{\hat{{{U}}}}, \boldsymbol{\hat{{{ V}}}},\boldsymbol{\hat{{H}}}_v)={}&\mathop{\arg\min}_{\boldsymbol{V},\boldsymbol{U},\boldsymbol{H}_v}\quad \frac{\epsilon}{2}\operatorname{Tr}\left(\boldsymbol{J}\right)+\frac{1}{2}{\|\boldsymbol{Y}-g(\boldsymbol{H}_v)\boldsymbol{S}\|}^2_\mathrm{F} \nonumber \\
\text { s.t. }& \boldsymbol{J}\triangleq\left[\begin{array}{cc}
\mathcal{T}_2({{\boldsymbol{U}}}) & {\boldsymbol{H}_v} \nonumber\\
\boldsymbol{H}_v^{\mathrm{H}} & \mathcal{T}_2({{\boldsymbol{V}}})
\end{array}\right] \succeq \mathbf{0}, \nonumber\\
{}&  \boldsymbol{H}_v^{(i,j)}=\boldsymbol{H}_v^{(k,z)},\text{ if } i+j=k+z. \label{optprob}\vspace{-3pt}
\end{align}
where $\epsilon\varpropto\sigma\sqrt{(\frac{N+1}{2})^2N_rN_t\log((\frac{N+1}{2})^2N_rN_t)}$ (see \cite{Tsai}) is a regularization parameter and ${\|\cdot\|}_\mathrm{F}$ is the Frobenius norm. Furthermore, it is possible to estimate the TOAs, AODs and AOAs ($\hat\tau_k$, $\hat\theta_{\mathrm{Tx},k}$, and $\hat\theta_{\mathrm{Rx},k}$) from the Vandermonde decomposition of the solution $(\boldsymbol{\hat{{{U}}}}, \boldsymbol{\hat{{{ V}}}},\boldsymbol{\hat{{H}}}_v)$ of (\ref{optprob}). The estimated channel parameters corresponding to the same path are  paired via the matrix pencil and pairing algorithm~\cite{Yang2}. Note that, distinct from \cite{Beygi,Elnakeeb,Li}, a {\em multi-dimensional} ANM based estimator is proposed in this section and all the location-relevant parameters can be simultaneously recovered with super-resolution by harnessing the structure of $\bm{H}_v$.
\vspace{-10pt}
\subsection{Localization and orientation estimation}
\vspace{-5pt}
We assume that the LOS path is that with the smallest TOA. Though the estimated location and orientation can be directly computed based on the geometry of the LOS path, more accurate estimates can be achieved by exploiting the geometry of the NLOS paths \cite{Shahmansoori}.  Once the parameter $\boldsymbol{\eta}$, which parametrizes (\ref{optprob}) given the channel coefficients $\{a_k\}_{k \in \{0,1\ldots, K\}}$ is estimated through the procedure presented in Section \ref{sec:channelestimation}, the final step consists in recovering the location and orientation from \eqref{eq:geometricMapping}.

Since we make no assumptions on the path loss model in the signal model, knowledge of the channel coefficients do not improve the accuracy of the localization and orientation estimation. In addition, $\{\boldsymbol p,\theta_o, \{\boldsymbol{s}_k, a_k\}_{k \in \{0,1\ldots, K\}}\}$ can be used to re-parametrize the optimization problem in (\ref{optprob}). Therefore, we fix the estimated channel coefficients\footnote{We can substitute the estimated $\boldsymbol{\eta}$ into (\ref{signalmodelstacked}) to achieve a system of linear equations to compute the estimates of channel coefficients \cite{Li}.} and propose a weighted least squares problem to achieve an accurate localization and orientation estimation, with the estimates of all the paths, {\it i.e.,} $\hat{\boldsymbol\eta}$, exploited,\vspace{-5pt}
\begin{equation}
{\hat{\tilde{\boldsymbol\eta}}} = \arg\min_{\tilde{\boldsymbol\eta}} \left(\hat{\boldsymbol\eta}-f({\tilde{\boldsymbol\eta}})\right)^{\mathrm{T}}\mathcal{D}\left(\hat{\boldsymbol\eta}-f({\tilde{\boldsymbol\eta}})\right),
\label{exip_etatilde}\vspace{-5pt}
\end{equation}
where the mapping $f(\cdot)$ is defined according to the geometry, as described in (\ref{tau0})-(\ref{thetarxk}), with $f({\tilde{\boldsymbol\eta}})={\boldsymbol\eta}$. Inspired by the EXIP \cite{Stoica,Shahmansoori}, we denote by $L(\boldsymbol{\eta})$ the objective function in (\ref{optprob})  and use the Hessian matrix as the weight matrix in (\ref{exip_etatilde}), {\it i.e.,} \vspace{-6pt}
\begin{equation}
    \begin{aligned}
        \boldsymbol{\mathcal{D}} \triangleq
        &\left[\begin{array}{llll}
        \frac{\partial^2{L}(\hat{\boldsymbol\eta})}{\partial{\tau_1}\partial{\tau_1}}
        &\frac{\partial^2{L}(\hat{\boldsymbol\eta})}{\partial{\tau_1}\partial{\theta_{\mathrm{Tx},1}}} & \cdots&\frac{\partial^2{L}(\hat{\boldsymbol\eta})}{\partial{\tau_1}\partial{\theta_{\mathrm{Rx},L}}}\\
        \frac{\partial^2{L}(\hat{\boldsymbol\eta})}{\partial{\theta_{\mathrm{Tx},1}}\partial{\tau_1}}& \frac{\partial^2{L}(\hat{\boldsymbol\eta})}{\partial{\theta_{\mathrm{Tx},1}}\partial{\theta_{\mathrm{Tx},1}}}& \cdots & \frac{\partial^2{L}(\hat{\boldsymbol\eta})}{\partial{\theta_{\mathrm{Tx},1}}\partial{\theta_{\mathrm{Rx},L}}}\\
        & & \vdots & \\
        \frac{\partial^2{L}(\hat{\boldsymbol\eta})}{\partial{\theta_{\mathrm{Rx},L}}\partial{\tau_1}}& \frac{\partial^2{L}(\hat{\boldsymbol\eta})}{\partial{\theta_{\mathrm{Rx},L}}\partial{\theta_{\mathrm{Tx},1}}}& \cdots & \frac{\partial^2{L}(\hat{\boldsymbol\eta})}{\partial{\theta_{\mathrm{Rx},L}}\partial{\theta_{\mathrm{Rx},L}}}\\
        \end{array}\right],
    \end{aligned}\vspace{-3pt}
\end{equation}
which depends on the channel parameters estimated via the proposed ANM based method of Section \ref{sec:channelestimation}.

The non-linear least squares problem in (\ref{exip_etatilde}) can be solved via the Levenberg-Marquard-Fletcher algorithm \cite{Fletcher}. The parameters in $\tilde{\boldsymbol{\eta}}$ are initialized with the values $\boldsymbol{\hat p}_{\text{LOS}}$, $\hat{\theta}_{o,\text{LOS}}$, $\{\hat s_{k,y,\text{LOS}}\}_{k \in \{0,1\ldots, K\}}$, and $\{\hat s_{k,x,\text{LOS}}\}_{k \in \{0,1\ldots, K\}}$, which are derived in the following set of equations, \vspace{-7pt}
\begin{subequations}\label{eq:locEstimator}
\begin{align}
\boldsymbol{\hat p}_{\text{LOS}} &= \boldsymbol{q} + c\hat\tau_0[\cos({\hat\theta}_{\mathrm{Tx},0}),\sin(\hat\theta_{\mathrm{Tx},0})]^{\mathrm{T}}.\label{hatp}\vspace{-6pt}\\
\hat{\theta}_{_o,\text{LOS}} &= \pi + \hat\theta_{\mathrm{Tx},0} - \hat\theta_{\mathrm{Rx},0}.\\
\hat s_{k,y,\text{LOS}} &= \tan(\hat\theta_{\mathrm{Tx},k})(\hat s_{k,x}-q_{x})+q_{y}.
\label{hatsky}\vspace{-6pt}
\end{align}
\vspace{-15pt}
\begin{equation}
\begin{aligned}
&\hat s_{k,x,\text{LOS}}= \\
& \frac{\tan(\hat\theta_{\mathrm{Tx},k})q_{x}-\tan(\hat\theta_{\mathrm{Rx},k}+\hat{\theta}_{_o,\text{LOS}})\hat p_{\text{LOS},x}+\hat p_{\text{LOS},y}-q_{y}}{\tan(\hat\theta_{\mathrm{Tx},k})-\tan(\hat\theta_{\mathrm{Rx},k}+\hat{\theta}_{_o,\text{LOS}})}.
\label{hatskx}
\end{aligned}
\end{equation}

\end{subequations}
\vspace{-5pt}
\section{Numerical Results}\label{sec:sim}
\vspace{-5pt}
In this section, we evaluate the performance of our proposed scheme. In all of the numerical results, we set $f_c$, $B$, $c$, $N$, $N_r$, $N_t$, $G$, $K$, and $d$ to $60$ GHz, $100$ MHz, $300$ m/us, $15$, $16$, $16$, $16$, $2$, and $\frac{\lambda_c}{2}$, respectively. The BS is located at $[0 \text{ m},0 \text{ m}]^{\mathrm{T}}$ while the target is at $[20 \text{ m},5 \text{ m}]^{\mathrm{T}}$ with an orientation $\theta_o=0.2$ rad. The scatterers corresponding to two NLOS paths are placed at $[7.45 \text{ m}, 8.54 \text{ m}]^{\mathrm{T}}$ and $[19.89\text{ m}, -6.05 \text{ m}]^{\mathrm{T}}$, respectively. The channel coefficients are generated based on the free-space path loss model \cite{Goldsmith} in the simulation and the pilot signals are set as random complex values uniformly distributed on the unit circle. Note that condition C1 of Proposition \ref{equivalence} is generally not satisfied for the given $N$, $N_r$ and $N_t$ in our experiments; however, we still achieve strong performance.

\begin{figure}[htbp]
\centering\vspace{-4pt}
\includegraphics[scale=0.535]{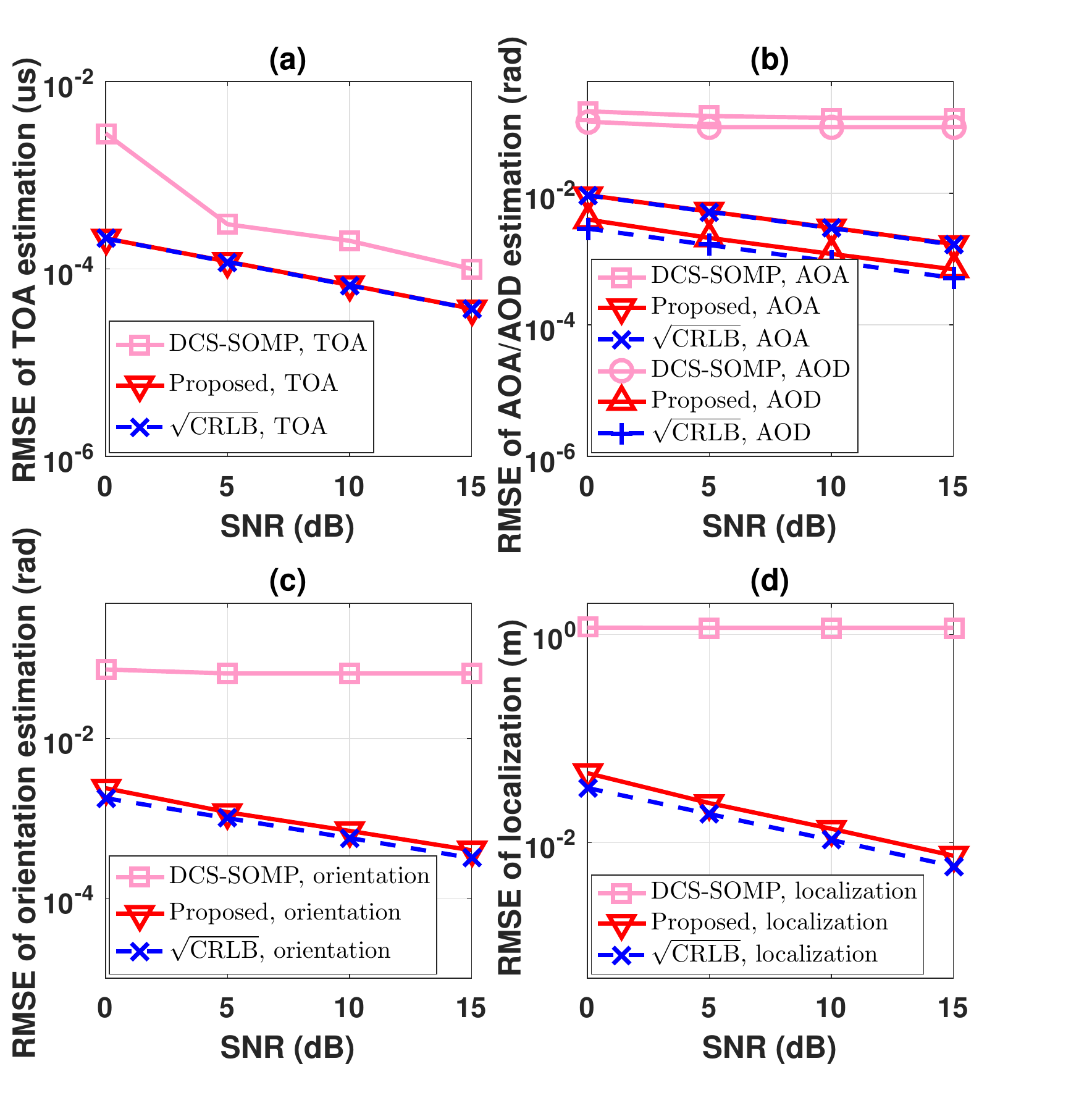}\vspace{-12pt}
\caption{{(a) The RMSE of TOA estimation; (b) The RMSE of AOA and AOD estimation; (c) The RMSE of orientation estimation; (d) The RMSE of localization.}\vspace{-10pt}}%
\label{numericalresults}
\end{figure}

The RMSEs of channel parameter estimation using our scheme are shown in Figs. \ref{numericalresults} (a) and (b), where the performance of DCS-SOMP \cite{Duarte,Shahmansoori} and CRLB \cite{Shahmansoori} are given as comparisons. As observed in Figs. \ref{numericalresults} (a) and (b), our proposed scheme outperforms DCS-SOMP, due to the fact that the grids of the AOAs and AODs are not dense enough for DCS-SOMP ({\it i.e.,} $N_t=N_r=16$); the estimation accuracy of our scheme does not rely on a grid resolution. Furthermore, the RMSEs of TOA, AOA, and AOD estimation using our scheme are close to or coincide with the corresponding CRLB curves according to Figs. \ref{numericalresults} (a) and (b).

Due to the quality of our super-resolution channel estimation,  lower RMSEs for localization and orientation estimation are achieved as seen in Figs. \ref{numericalresults} (c) and (d) versus the DCS-SOMP based method \cite{Shahmansoori}\footnote{To make a fair comparison, the refinement of estimates of channel parameters in \cite{Shahmansoori} is not implemented for both schemes. Note that, compared to DCS-SOMP, our scheme could provide more accurate estimates for the initialization of the refinement stage to avoid local optima.}. In addition, there is only around 2dB gap between the RMSE for localization or orientation estimation using our proposed scheme and the CRLB curves, verifying the efficacy of our design.

\vspace{-5pt}
\section{Conclusions}\label{sec:con}\vspace{-5pt}
In this paper, a multi-dimensional atomic norm based method is proposed for  high-accuracy localization and orientation estimation in mmWave MIMO OFDM systems.  To effectively estimate all of the location-relevant channel parameters with super-resolution, a novel virtual channel matrix is designed and its structure is fully exploited. Using the estimates of all the paths, a weighted least squares problem is proposed based on the extended invariance principle to accurately recover the location and orientation. The new method offers strong improvements with respect to the RMSE of estimation over prior work \cite{Shahmansoori} (more than $7$ dB gain). Furthermore, with the proposed method, the RMSEs of channel estimation, localization and orientation estimation are close to, or coincide with the corresponding CRLBs.

%


\vfill
\pagebreak



%
\renewcommand*{\bibfont}{\small}
\printbibliography


\end{document}